\documentclass{article}
\newtheorem{theorem}{Theorem}

\newtheorem{corollary}[theorem]{Corollary}

\newtheorem{proposition}[theorem]{Proposition}

\newenvironment{proof}[1][Proof]{\noindent\textbf{#1.} }{\ \rule{0.5em}{0.5em}}
\usepackage{amsmath}
\usepackage{amssymb}
\title{Einstein--Maxwell Sum Rules\\and the Persistence of the Smooth-Brane No-Go Theorem}
\author{Paulo Michel Longo Tavares da Silva\\
Faculty of Technology of Botucatu (FATEC Botucatu)\\S\~ao Paulo, Brazil}
\date{}

\begin{document}
\maketitle

\begin{abstract}
We investigate whether a minimally coupled Abelian Maxwell sector can relax the Gibbons--Kallosh--Linde (GKL) obstruction to smooth five-dimensional warped braneworlds. For a canonical scalar field coupled to a bulk Maxwell field, we derive the one-parameter family of global consistency conditions while keeping the tangential $F_{\mu\nu}$ and mixed $F_{\mu5}$ sectors separate. This decomposition is essential because the Maxwell stress tensor is not traceless in five dimensions. We show that electromagnetic terms formally modify the integrated GKL constraints, but this modification does not provide a consistent evasion of the no-go theorem under the standard assumptions. Exact four-dimensional Poincar\'e invariance excludes a nonzero deterministic classical Maxwell field from the background, since its field strength would select preferred four-dimensional directions. The Maxwell sector therefore vanishes in the symmetric background and the usual smooth-brane obstruction is recovered. We also identify which assumptions must be relaxed for a nontrivial electromagnetic contribution and distinguish this background result from local consistency conditions for bulk gauge-field localization.
\end{abstract}

\section{Introduction}
Warped braneworld models provide a geometrical framework in which a four-dimensional spacetime is embedded in a higher-dimensional bulk. The Randall--Sundrum constructions established the canonical five-dimensional warped setting and motivated a broad class of scalar--gravity and smooth domain-wall realizations \cite{Randall1999a,Randall1999b,DeWolfe2000}. In five dimensions, the consistency of a compact warped geometry is constrained not only by the local Einstein equations but also by integrated relations obtained by combining different traces of those equations. Gibbons, Kallosh and Linde (GKL) showed that these relations provide powerful global tests of braneworld constructions and, in particular, lead to an obstruction to smooth periodic domain-wall realizations in the simplest five-dimensional Einstein--scalar setting \cite{Gibbons2001}. Generalizations to arbitrary dimensions and to modified gravitational sectors have subsequently been considered \cite{Leblond2001,Hoff2011,daSilva2017,Yang2024}. In particular, the $f(R)$--Einstein--Palatini formulation provides a five-dimensional example in which the smooth-brane consistency condition is modified by the gravitational sector itself \cite{daSilva2017}.

Gauge fields introduce a distinct question. Their stress tensor carries anisotropic information and, in five dimensions, the Maxwell stress tensor is not traceless. Moreover, the problem of a Maxwell field as part of a higher-dimensional background must be distinguished from the localization of a four-dimensional gauge zero mode. The localization of gauge fields in warped backgrounds has a long history \cite{Oda2000,Guerrero2010}. More recently, consistency conditions obtained directly from the Einstein equations have shown that normalizability alone is insufficient and can impose strong restrictions on bulk vector zero modes \cite{Freitas2020,Alencar2025,Alencar2026End}. Nonlinear electromagnetic sources have also been considered in warped backgrounds using local sum-rule methods \cite{Alencar2026NED}.

The purpose of the present work is narrower and complementary. We derive the \emph{global} GKL consistency family for a five-dimensional Einstein--Maxwell--scalar system, keeping separately the sectors $F_{\mu\nu}$ and $F_{\mu5}$. This decomposition makes transparent which electromagnetic components enter each integrated condition. We then confront the global result with the tensorial requirements of an exactly Poincar\'e-invariant four-dimensional background. No claim of priority is required for the GKL framework itself; the emphasis here is on the explicit Einstein--Maxwell--scalar decomposition and on the distinction between a formal global contribution and a Maxwell background compatible with the assumed four-dimensional symmetry.

For a flat four-dimensional slice, the resulting family takes the form
\begin{equation}
\oint dy\,W^{\xi+1}\left[(3-\xi)\Phi'^2+2(\xi+1)V(\Phi)
+\frac{2-\xi}{4\pi}F_{\mu5}F^{\mu5}
+\frac{\xi-1}{8\pi}F_{\mu\nu}F^{\mu\nu}\right]=0.
\label{intro_sumrule}
\end{equation}
Two values of the parameter are especially informative. At $\xi=-1$, the scalar potential drops out and the Maxwell terms formally modify the integrated scalar obstruction. At $\xi=3$, the scalar-gradient contribution cancels while both electromagnetic sectors remain. These observations motivate the detailed derivation below. However, once exact four-dimensional Poincar\'e symmetry is imposed on the full classical background, one finds $F_{\mu\nu}=F_{\mu5}=0$; consequently the $\xi=-1$ relation returns to the standard positive scalar-gradient condition, $\oint dy\,\Phi'^2=0$. Thus the electromagnetic contribution changes the formal global identity without evading the smooth-brane no-go theorem within the standard symmetric setup.

\section{Five-dimensional Einstein--Maxwell--scalar system}
We consider the action
\begin{equation}
S=\int d^5x\sqrt{-g}\left[
\frac{R}{16\pi G_5}
-\frac{1}{16\pi}F_{AB}F^{AB}
-\frac12\partial_A\Phi\,\partial^A\Phi
-V(\Phi)
\right],
\label{action}
\end{equation}
where capital Latin indices $A,B,\ldots=0,1,2,3,5$, Greek indices $\mu,\nu,\ldots=0,1,2,3$, and $F_{AB}=\partial_A A_B-\partial_B A_A$. The field equations are
\begin{equation}
G_{AB}=8\pi G_5 T_{AB},
\end{equation}
\begin{equation}
\nabla_A F^{AB}=0,
\qquad
\Box_5\Phi-V_{,\Phi}=0,
\end{equation}
with
\begin{equation}
T_{AB}=T^{(\Phi)}_{AB}+T^{(F)}_{AB},
\end{equation}
\begin{equation}
T^{(\Phi)}_{AB}=\partial_A\Phi\partial_B\Phi
-g_{AB}\left(\frac12\partial_C\Phi\partial^C\Phi+V\right),
\label{scalarT}
\end{equation}
and
\begin{equation}
T^{(F)}_{AB}=\frac{1}{4\pi}\left(
F_{AC}F_B{}^C-\frac14g_{AB}F_{CD}F^{CD}
\right).
\label{maxwellT}
\end{equation}
In $D$ dimensions the Maxwell trace is proportional to $(4-D)F^2$; hence it is nonzero in five dimensions. Explicitly,
\begin{equation}
T^{(F)A}{}_A=-\frac{1}{16\pi}F_{AB}F^{AB}.
\end{equation}
This observation is important for the coefficients entering the sum rules.

\section{Warped geometry and the GKL global identity}
We specialize to one compact internal coordinate $y$ and write
\begin{equation}
ds^2=W^2(y)\,\bar g_{\mu\nu}(x)dx^\mu dx^\nu+dy^2,
\label{metric}
\end{equation}
where $W(y)>0$ is the warp factor. The scalar background is taken to depend only on the extra coordinate,
\begin{equation}
\Phi=\Phi(y).
\end{equation}
For a periodic compact internal space, the GKL identity may be written as \cite{Gibbons2001,Alencar2026End}
\begin{equation}
\oint dy\,W^n\left[T_\mu{}^\mu+(2n-4)T_5{}^5\right]
=\frac{1-n}{8\pi G_5}\,\bar R\oint dy\,W^{n-2},
\label{GKL}
\end{equation}
where $\bar R$ is the scalar curvature associated with $\bar g_{\mu\nu}$. For a flat four-dimensional slice, $\bar R=0$.

To make contact with the parameter used below, define
\begin{equation}
n=\xi+1.
\end{equation}
We shall derive the matter combination entering Eq.~(\ref{GKL}) rather than importing coefficients from a lower-dimensional Maxwell trace.

\section{Partial traces of the matter stress tensor}
\subsection{Scalar sector}
For $\Phi=\Phi(y)$, Eq.~(\ref{scalarT}) gives
\begin{align}
T_\mu^{(\Phi)\,\mu}&=-2\Phi'^2-4V,\\
T_5^{(\Phi)\,5}&=\frac12\Phi'^2-V.
\end{align}
Therefore
\begin{equation}
T_\mu^{(\Phi)\,\mu}+(2n-4)T_5^{(\Phi)\,5}
=(n-4)\Phi'^2-2nV.
\label{scalarcomb}
\end{equation}

\subsection{Maxwell sector}
Introduce the two invariants
\begin{equation}
X\equiv F_{\mu\nu}F^{\mu\nu},
\qquad
Y\equiv F_{\mu5}F^{\mu5},
\end{equation}
so that
\begin{equation}
F_{AB}F^{AB}=X+2Y.
\end{equation}
From Eq.~(\ref{maxwellT}), the four-dimensional partial trace is
\begin{equation}
T_\mu^{(F)\,\mu}=-\frac{Y}{4\pi},
\label{maxtrace4}
\end{equation}
and the internal component is
\begin{equation}
T_5^{(F)\,5}=\frac{Y}{8\pi}-\frac{X}{16\pi}.
\label{max55}
\end{equation}
Consequently,
\begin{equation}
T_\mu^{(F)\,\mu}+(2n-4)T_5^{(F)\,5}
=\frac{n-3}{4\pi}Y-\frac{n-2}{8\pi}X.
\label{maxcomb}
\end{equation}

\section{Einstein--Maxwell--scalar global sum rule}
For $\bar R=0$, substitution of Eqs.~(\ref{scalarcomb}) and (\ref{maxcomb}) into Eq.~(\ref{GKL}), followed by multiplication by $-1$ and use of $n=\xi+1$, gives
\begin{equation}
\boxed{\begin{aligned}
0=\oint dy\,W^{\xi+1}\Bigg[&
(3-\xi)\Phi'^2+2(\xi+1)V(\Phi)\\
&+\frac{2-\xi}{4\pi}F_{\mu5}F^{\mu5}
+\frac{\xi-1}{8\pi}F_{\mu\nu}F^{\mu\nu}
\Bigg].
\end{aligned}}
\label{mastersum}
\end{equation}
Equation~(\ref{mastersum}) is the central global consistency condition used in the remainder of this work. It is important that $F_{\mu\nu}F^{\mu\nu}$ is not positive definite in Lorentzian signature; hence sign arguments must not treat it as an ordinary Euclidean square.

\subsection{The case $\xi=-1$}
Setting $\xi=-1$ eliminates the potential term and yields
\begin{equation}
\boxed{\oint dy\left[
\Phi'^2+\frac{3}{16\pi}F_{\mu5}F^{\mu5}
-\frac{1}{16\pi}F_{\mu\nu}F^{\mu\nu}
\right]=0.}
\label{ximinusone}
\end{equation}
In the absence of a Maxwell background, Eq.~(\ref{ximinusone}) reduces to
\begin{equation}
\oint dy\,\Phi'^2=0,
\end{equation}
which, for a regular real scalar on a compact internal space, implies $\Phi'=0$. This is the familiar smooth-brane obstruction in the corresponding GKL setting \cite{Gibbons2001}. In contrast, Eq.~(\ref{ximinusone}) by itself does \emph{not} imply $\oint\Phi'^2=0$ when a Maxwell background is retained. The electromagnetic terms must first be constrained by the local field equations and by the symmetries of the background.

\subsection{The case $\xi=3$}
For $\xi=3$, the scalar-gradient term vanishes, but both electromagnetic sectors survive. Equation~(\ref{mastersum}) becomes
\begin{equation}
8\oint dy\,W^4V(\Phi)
-\frac{1}{4\pi}\oint dy\,W^4F_{\mu5}F^{\mu5}
+\frac{1}{4\pi}\oint dy\,W^4F_{\mu\nu}F^{\mu\nu}=0,
\end{equation}
or equivalently
\begin{equation}
\boxed{
32\pi\oint dy\,W^4V(\Phi)
+\oint dy\,W^4F_{\mu\nu}F^{\mu\nu}
=\oint dy\,W^4F_{\mu5}F^{\mu5}.}
\label{xithree}
\end{equation}
Thus the $\xi=3$ member isolates a balance among the weighted scalar potential, the tangential Maxwell invariant, and the mixed Maxwell invariant.

\section{Compatibility with exact four-dimensional Poincar\'e symmetry}
The global relations above are necessary integrated conditions. They do not by themselves guarantee that a proposed matter configuration is compatible with the full tensorial Einstein equations. We now impose the symmetry assumptions of the standard flat warped background,
\begin{equation}
\bar g_{\mu\nu}=\eta_{\mu\nu}.
\end{equation}
For the metric (\ref{metric}), the mixed Einstein tensor vanishes,
\begin{equation}
G_{\mu5}=0,
\end{equation}
and hence
\begin{equation}
T_{\mu5}=0.
\label{mixedconstraint}
\end{equation}
The scalar background contributes no mixed component. Maxwell theory gives
\begin{equation}
T^{(F)}_{\mu5}=\frac{1}{4\pi}F_{\mu\nu}F_5{}^\nu,
\end{equation}
so Eq.~(\ref{mixedconstraint}) requires
\begin{equation}
F_{\mu\nu}F_5{}^\nu=0.
\label{mixedmaxconstraint}
\end{equation}

Exact four-dimensional Poincar\'e invariance imposes a stronger requirement: the background stress tensor in the four-dimensional directions must be proportional to $g_{\mu\nu}$. The Maxwell contribution is
\begin{equation}
\begin{aligned}
T^{(F)}_{\mu\nu}=\frac{1}{4\pi}\Big[&
F_{\mu\rho}F_\nu{}^\rho+F_{\mu5}F_\nu{}^5\\
&-\frac14g_{\mu\nu}(X+2Y)\Big].
\end{aligned}
\label{maxmunu}
\end{equation}
A nonzero four-vector $v_\mu\equiv F_{\mu5}$ produces the rank-one tensor $v_\mu v_\nu$, selecting a preferred four-dimensional direction. Likewise, a nonzero classical two-form $F_{\mu\nu}$ defines electric/magnetic directions and its quadratic stress is not invariant under the full Lorentz group. Therefore a deterministic Abelian Maxwell field that belongs to the background is incompatible with exact four-dimensional Poincar\'e invariance unless its field strength vanishes.

\begin{proposition}\textbf{Maxwell obstruction with exact four-dimensional Poincar\'e symmetry.}\\
Consider the five-dimensional Einstein--Maxwell--scalar system (\ref{action}) on a smooth warped background
\[
ds^2=W^2(y)\eta_{\mu\nu}dx^\mu dx^\nu+dy^2,
\qquad \Phi=\Phi(y),
\]
with a classical minimally coupled Abelian Maxwell field included as part of the background. If the full background preserves the four-dimensional Poincar\'e group, then
\begin{equation}
F_{\mu5}=0,\qquad F_{\mu\nu}=0,
\end{equation}
and therefore $F_{AB}=0$.
\end{proposition}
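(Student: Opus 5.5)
The approach is to interpret ``the full background preserves the four-dimensional Poincar\'e group'' as invariance of every background field under the isometries $x^\mu\mapsto \Lambda^\mu{}_\nu x^\nu+a^\mu$ of the metric $W^2(y)\eta_{\mu\nu}dx^\mu dx^\nu+dy^2$. This is the natural reading, since these maps act at fixed $y$. The field strength $F_{AB}$ is gauge invariant, so we require $\mathcal{L}_\zeta F=0$ for every Poincar\'e Killing vector $\zeta$. We will not argue through the vector potential $A_A$, which need only be invariant up to gauge.

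\textbf{Step 1 (translations).} Invariance under $x^\mu\mapsto x^\mu+a^\mu$ gives $\partial_\rho F_{AB}=0$, so every component is a function of $y$ alone: $F_{\mu\nu}=f_{\mu\nu}(y)$ and $F_{\mu5}=v_\mu(y)$.

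\textbf{Step 2 (Lorentz transformations).} At fixed $y$, a Lorentz transformation acts on $v_\mu(y)$ as a four-vector of $SO(1,3)$ and on $f_{\mu\nu}(y)$ as an antisymmetric tensor. Invariance requires
\[
\Lambda_\mu{}^\alpha v_\alpha=v_\mu,\qquad \Lambda_\mu{}^\alpha\Lambda_\nu{}^\beta f_{\alpha\beta}=f_{\mu\nu}
\]
for all $\Lambda\in SO^+(1,3)$. These are equations for invariant tensors of the Lorentz group. The vector representation has no nonzero invariant vector; for instance, a rotation together with a boost moves any nonzero $v$, whatever its causal character. Hence $v_\mu=0$.

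For the two-form, the cleanest route is the infinitesimal form $\omega_\mu{}^\alpha f_{\alpha\nu}+\omega_\nu{}^\beta f_{\mu\beta}=0$ for all generators $\omega$. One can take $\omega$ to be a rotation in the $(1,2)$ plane and a boost in the $(0,3)$ plane. This forces the components of $f$ to vanish except possibly $f_{03}$ and $f_{12}$. A second pair of generators, for example the rotation in the $(2,3)$ plane, kills those as well. Alternatively, one can note that $\Lambda^2\mathbb{R}^{1,3}$ is the irreducible six-dimensional representation (the complexified $(1,0)\oplus(0,1)$), which has no trivial summand. This gives $F_{\mu\nu}=0$.

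\textbf{Step 3 (conclusion).} Combining the two results gives $F_{AB}=0$, since $F_{55}=0$ identically. As consistency checks, we will remark that this agrees with the necessary condition (\ref{mixedmaxconstraint}), and that Eq.~(\ref{maxmunu}) then becomes trivially proportional to $g_{\mu\nu}$.

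\textbf{Main obstacle.} We expect the main obstacle to be not the algebra but the interpretation of the hypothesis. One might object that only $T_{AB}$ needs to be invariant, not $F_{AB}$. To address this we will add a remark. Suppose only the stress tensor is required to satisfy $T^{(F)}_{\mu\nu}\propto g_{\mu\nu}$ and $T^{(F)}_{\mu5}=0$. With translations and $y$-dependence as in Step 1, the rank-one term $v_\mu v_\nu$ in Eq.~(\ref{maxmunu}) together with the $f$ terms must combine to a multiple of $\eta$. For a pure $v$ this is impossible unless $v=0$. For the general case, the $4\times4$ Maxwell stress of a nonzero two-form cannot be pure trace, because its traceless part vanishes only when $f=0$ by the Lorentzian Rainich-type algebra. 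Resolving this, or restricting to the field-level symmetry assumption stated in the proposition, is the step requiring care. Our primary proof will adopt the field-level invariance, which is what ``the full background preserves the Poincar\'e group'' literally states.
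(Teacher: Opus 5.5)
Your proposal is correct and uses the same key fact as the paper's proof (the Lorentz group has no nonzero invariant four-vector or antisymmetric two-tensor); the difference is that you impose invariance on $F_{AB}$ itself, with explicit translation and generator checks, whereas the paper argues the $F_{\mu5}$ step through the rank-one term $v_\mu v_\nu$ of $T^{(F)}_{\mu\nu}$. Your field-level reading is the tighter one: any purely stress-tensor version (the paper's $v_\mu v_\nu$ step, or the mixed case of your side remark) must still rule out a cancellation between $v_\mu v_\nu$ and $F_{\mu\rho}F_\nu{}^\rho$, and this follows because the traceless four-dimensional Maxwell stress obeys $\tau_{\mu\nu}u^\mu u^\nu=\tfrac12(E^2+B^2)\ge0$ for unit timelike $u$, while $-(v_\mu v_\nu-\tfrac14 g_{\mu\nu}v^2)u^\mu u^\nu<0$ for a suitable timelike $u$ whenever $v\neq0$.
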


\begin{proof}
For the warped metric above, four-dimensional Poincar\'e invariance requires the background energy--momentum tensor to contain no nonzero four-vector and no nontrivial Lorentz tensor other than the metric itself. First define
\begin{equation}
v_\mu\equiv F_{\mu5}.
\end{equation}
The Maxwell stress tensor contains the contribution
\begin{equation}
\frac{1}{4\pi}v_\mu v_\nu.
\end{equation}
If $v_\mu\neq0$, this rank-one tensor identifies a preferred four-dimensional direction. It cannot be proportional to the nondegenerate Lorentz metric $g_{\mu\nu}$ for all components. Full Poincar\'e invariance therefore implies
\begin{equation}
F_{\mu5}=0.
\end{equation}

With the mixed sector absent, the remaining Maxwell contribution in the brane directions is
\begin{equation}
T^{(F)}_{\mu\nu}=\frac{1}{4\pi}\left(
F_{\mu\rho}F_\nu{}^\rho-\frac14g_{\mu\nu}F_{\rho\sigma}F^{\rho\sigma}
\right).
\end{equation}
A nonzero deterministic antisymmetric two-form $F_{\mu\nu}$ decomposes, in any local Lorentz frame, into electric and magnetic spatial vectors. Such a configuration selects preferred directions and is not invariant under the full four-dimensional Lorentz group. Equivalently, there is no nonzero Lorentz-invariant antisymmetric rank-two tensor in four-dimensional Minkowski spacetime. Hence exact Poincar\'e invariance of the background requires
\begin{equation}
F_{\mu\nu}=0.
\end{equation}
Together these conditions give $F_{AB}=0$. The mixed Einstein equation $G_{\mu5}=8\pi G_5T_{\mu5}$ is then automatically satisfied. Notice that the conclusion follows from the symmetry of the complete classical background and is stronger than the integrated sum rule alone.
\end{proof}

\noindent
This proposition concerns a deterministic Maxwell field belonging to the background. It does not state that Maxwell perturbations or effective four-dimensional gauge excitations cannot be considered. Their localization and dynamical consistency constitute a different problem. Recent local analyses show that minimally coupled Maxwell zero modes face an independent no-go obstruction when gravitational consistency and localization are imposed simultaneously \cite{Alencar2025,Alencar2026End}.

\begin{corollary}\textbf{Persistence of the GKL smooth-brane obstruction.}
Under the hypotheses of the proposition, the $\xi=-1$ global sum rule becomes
\begin{equation}
\oint dy\,\Phi'^2=0.
\end{equation}
For a real smooth scalar on a compact internal space, $\Phi'^2\geq0$ pointwise, and therefore
\begin{equation}
\Phi'(y)=0
\end{equation}
everywhere. Consequently, a minimally coupled classical Maxwell background does not relax the standard GKL smooth-brane obstruction while exact four-dimensional Poincar\'e invariance, compactness and regularity are maintained.
\end{corollary}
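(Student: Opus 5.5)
The argument combines the Proposition with the $\xi=-1$ member of the global family, Eq.~(\ref{ximinusone}). The point of choosing $\xi=-1$ is that the weight $W^{\xi+1}$ becomes $W^0=1$ and the potential coefficient $2(\xi+1)$ vanishes. The resulting integrand is therefore independent of both $W$ and $V$, and it contains only $\Phi'^2$ and the two Maxwell invariants $Y=F_{\mu5}F^{\mu5}$ and $X=F_{\mu\nu}F^{\mu\nu}$.

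First, I check that the hypotheses of the Corollary make Eq.~(\ref{ximinusone}) available. Eq.~(\ref{mastersum}) was derived from the GKL identity (\ref{GKL}) with $\bar R=0$. In the Proposition, $\bar g_{\mu\nu}=\eta_{\mu\nu}$, so the flat-slice hypothesis holds. The internal space is compact and periodic, so $\oint dy$ makes sense, and smoothness lets total derivatives integrate to zero in the GKL derivation.

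Second, I invoke the Proposition. It gives $F_{\mu5}=0$ and $F_{\mu\nu}=0$ pointwise in $y$, so $X=Y=0$ identically. Substituting into Eq.~(\ref{ximinusone}) leaves $\oint dy\,\Phi'^2=0$.

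Third, I conclude pointwise vanishing. Because $\Phi$ is real and smooth, $\Phi'^2$ is continuous and nonnegative on the compact circle. If $\Phi'(y_0)\neq0$ at some point $y_0$, continuity gives a neighbourhood where $\Phi'^2>\tfrac12\Phi'(y_0)^2$. The integral would then be strictly positive, a contradiction. Hence $\Phi'\equiv0$ and $\Phi$ is constant, so there is no smooth kink. The scalar sector thus reduces to a constant, and the Maxwell sector is absent by the Proposition. This is exactly the standard GKL obstruction, which proves the interpretive last sentence of the Corollary.

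I expect no real obstacle. The one point needing care is that the Maxwell terms are eliminated pointwise by the Proposition, not through the integral. This matters because $X$ is not sign-definite in Lorentzian signature, so no sign argument on the Maxwell terms could replace the Proposition.
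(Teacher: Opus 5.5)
Your proposal is correct and follows the paper's argument. The Proposition removes $F_{\mu5}F^{\mu5}$ and $F_{\mu\nu}F^{\mu\nu}$ pointwise, so Eq.~(\ref{ximinusone}) collapses to $\oint dy\,\Phi'^2=0$, and nonnegativity of $\Phi'^2$ (which you make rigorous with a continuity argument on the circle) forces $\Phi'\equiv0$; your remark that the indefinite $F_{\mu\nu}F^{\mu\nu}$ term must be removed by the Proposition rather than by a sign argument is exactly the point the paper stresses.
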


\section{Global modification versus actual evasion of the no-go theorem}
It is useful to distinguish two logically different statements. Equation~(\ref{ximinusone}) shows that, before the complete background constraints are imposed, Maxwell theory \emph{modifies} the $\xi=-1$ global integral condition. In that limited sense the purely scalar GKL relation is relaxed at the level of the formal integrated identity. This observation, however, is not an evasion of the no-go theorem. An actual evasion requires a nonzero electromagnetic configuration that simultaneously satisfies the Maxwell equations, the complete Einstein equations, the global sum rules, the compactness and regularity assumptions, and the symmetries assigned to the background.

For the standard flat warped ansatz the proposition above excludes such a classical Maxwell background. Hence
\begin{equation}
\boxed{\text{formal global modification}\;\not\Rightarrow\;\text{consistent no-go evasion}.}
\end{equation}
This distinction also separates the present argument from recent local consistency theorems for localized bulk fields. The latter constrain the existence of consistent and normalizable gauge zero modes in warped backgrounds \cite{Alencar2025,Alencar2026End}; here the question is instead whether a nonzero Maxwell field can belong to the classical Poincar\'e-invariant background and alter the GKL global obstruction. The two analyses are complementary rather than interchangeable.

\section{What must be relaxed for a nontrivial electromagnetic contribution?}
Equations (\ref{ximinusone}) and (\ref{xithree}) remain useful because they identify the assumptions under which electromagnetic contributions can become relevant. At least one ingredient of the standard setup must be changed. Possibilities include a four-dimensional geometry that is not maximally symmetric, anisotropic or compact-object geometries capable of supporting electromagnetic stress, additional fields whose combined stress tensor restores the desired symmetry, nonminimal gauge couplings, nonlinear electrodynamics, noncompact internal spaces with appropriate boundary terms, or additional localized sources.

This observation also clarifies the relation with recent local analyses. Local consistency conditions can be more restrictive than a single integrated identity and have led to strong no-go results for minimally coupled Maxwell zero modes in warped braneworlds \cite{Alencar2025,Alencar2026End}. Conversely, nonlinear electrodynamics has been used to construct nontrivial warped compact-object embeddings when the four-dimensional geometry itself supports electromagnetic structure \cite{Alencar2026NED}. The global relations derived here should therefore be regarded as necessary integral constraints and as a diagnostic tool for identifying which electromagnetic sectors can enter a proposed braneworld background.

\section{Discussion}
The decomposition into $X=F_{\mu\nu}F^{\mu\nu}$ and $Y=F_{\mu5}F^{\mu5}$ exposes two features that are obscured if Maxwell terms are inserted into the scalar sum rule by analogy. First, the mixed sector carries the coefficient $(2-\xi)/(4\pi)$ and therefore survives at $\xi=3$. Second, the scalar-gradient term vanishes at $\xi=3$, leaving a balance among the weighted scalar potential and the two Maxwell invariants through Eq.~(\ref{xithree}). This relation is formally nontrivial even though an exactly Poincar\'e-invariant Maxwell background is excluded by the stronger local symmetry requirement.

For $\xi=-1$, the global identity alone does not establish the original scalar no-go statement in the presence of arbitrary electromagnetic fields. Only after imposing the background symmetry does the Maxwell field vanish and the standard positive scalar-gradient integral reappear. This distinction between an integrated necessary condition and the complete tensorial consistency problem is essential.

The analysis suggests two natural extensions. One is to replace the flat four-dimensional metric by a geometry that can consistently carry electromagnetic stress and then solve the global conditions together with the local Einstein--Maxwell equations. A second is to replace linear Maxwell theory by a nonlinear or nonminimally coupled gauge sector. Both extensions alter the stress-tensor traces and hence the global sum-rule coefficients.

\section{Conclusions}
We have derived the global braneworld sum-rule family for a five-dimensional Einstein--Maxwell--scalar system, retaining separately the tangential and mixed components of the Maxwell field strength. For a flat four-dimensional slice the result is Eq.~(\ref{mastersum}). The derivation depends crucially on the fact that Maxwell theory is not traceless in five dimensions.

Two members of the family are particularly transparent. At $\xi=-1$, the scalar potential disappears and Maxwell modifies the formal global condition through both $F_{\mu5}F^{\mu5}$ and $F_{\mu\nu}F^{\mu\nu}$. Therefore the scalar no-go conclusion cannot be inferred from this integrated equation alone. At $\xi=3$, only the scalar-gradient sector cancels identically; the weighted relation (\ref{xithree}) contains the scalar potential and both Maxwell invariants.

The central conclusion is therefore sharper than a formal modification of the integrated constraint. When the full symmetry assumptions of the standard flat warped ansatz are imposed, a nonzero deterministic classical Maxwell background selects preferred four-dimensional directions. Exact Poincar\'e invariance forces $F_{AB}=0$, and the $\xi=-1$ relation reduces to the standard GKL smooth-brane obstruction. In this precise sense the no-go theorem persists in the minimally coupled Einstein--Maxwell--scalar system. The electromagnetic terms in the global family remain useful because they identify exactly how the integrated obstruction changes once one relaxes Poincar\'e invariance, linear Maxwell dynamics, compactness, regularity, minimal coupling, or the assumed matter content. Thus the global sum rule serves both as a consistency test and as a map of the assumptions that a genuine electromagnetic evasion would have to abandon.

\end{document}